\documentclass[letterpaper]{article}

\usepackage{authblk}

\usepackage{geometry}

\usepackage{xcolor}

\usepackage{amsmath,amsthm,amssymb,amsfonts}

\usepackage{mathtools}

\usepackage{proof}

\usepackage{comment}

\usepackage{listings}
\usepackage{tikz}
\usetikzlibrary{graphs}
\usetikzlibrary{trees,arrows}
\usetikzlibrary{graphs.standard}
\usetikzlibrary{shapes.geometric}
\usetikzlibrary{arrows.meta}
\tikzset{
  n/.style={shape=circle, draw=black, minimum size=20pt, font=\small},
  diredge/.style={->, >=stealth, thick},
  undiredge/.style={-, >=stealth, thick},
}

\usepackage[natbib=true, sorting=nyt, sortcites=true, maxcitenames=2]{biblatex}
\newtheorem*{theorem*}{Theorem}
\newtheorem{theorem}{Theorem}[section]

\newtheorem{proposition}[theorem]{Proposition}

\newtheorem{lemma}[theorem]{Lemma}

\usepackage{hyperref}

\let\temp\varphi
\let\varphi\phi
\let\phi\temp

\newcommand{\E}{\mathbb{E}}

\newcommand{\R}{\mathbb{R}}
\newcommand{\Z}{\mathbb{Z}}
\newcommand{\Q}{\mathbb{Q}}
\newcommand{\C}{\mathbb{C}}

\DeclarePairedDelimiter{\iop}{(}{)}
\DeclarePairedDelimiter{\iob}{[}{]}
\DeclarePairedDelimiter{\ios}{\{}{\}}

\newcommand{\p}{\iop*}
\newcommand{\s}{\ios*}
\renewcommand{\b}{\iob*}

\DeclareMathOperator{\NE}{NE}

\newcommand{\zo}{\s{0,1}}

\DeclareMathOperator{\vol}{Vol}
\DeclareMathOperator*{\mv}{MV}

\newcommand{\drn}{{\mathbin{!n}}}

\newcommand{\sdrn}{S_\drn}
\DeclareMathOperator*{\sys}{\mathcal{S}}
\DeclareMathOperator*{\game}{\mathcal{G}}
\DeclareMathOperator*{\gal}{Gal}

\newcommand{\algdeg}[1]{[\Q(#1):\Q]}

\begin{document}

\title{Nash Equilibria with Derangement Degree Probabilities}

\author{
    \begin{tabular}{c}
        Edan Orzech$^{1}$ \quad Martin Rinard$^{1,2}$\\
        $^{1}$MIT CSAIL\\
        $^{2}$National University of Singapore\\
        \vspace{0.5em}
        \texttt{\{iorzech,rinard\}@csail.mit.edu}
    \end{tabular}
}

\date{\today}

\maketitle

\begin{abstract}
    We prove for every $n\ge4$ the existence of an $n$-player game in normal form with integer payoffs that has a unique Nash equilibrium, which is fully mixed. In the equilibrium, each probability weight is an algebraic number of degree $\drn$ (the derangement number), and its minimal polynomial has Galois group $S_\drn$ and $\drn+1$ nonzero coefficients.
\end{abstract}

\section{Introduction}

In~\cite{orzech2025nash}, we presented for every $n\ge4$ an $n$-player $2$-action game $G_n$ with payoffs in $\s{0,1,2}$ and a unique Nash equilibrium (NE) $x^n=(x^n_1,\ldots,x^n_n)$ ($x^n_i\in[0,1]$ is the probability that player $i$ plays the first action) where every $x^n_i$ is irradical, meaning it is algebraic but not closed form expressible by radicals over $\Q$. There, $\algdeg{x^n_i}\in\s{6,26}$ and the Galois group (of the Galois closure of $\Q(x^n_i)/\Q$) is (isomorphic to) $S_6$ or $S_{26}$.

In this note we present a proof of a stronger result. The proof was developed using AI (see Section~\ref{sec:ai}). Let $\drn$ be the \emph{derangement number}: the number of permutations $n\to n$ without fixed points. It is known that $\drn=\b{\frac{n!}{e}}$. We prove the following:
\begin{proposition}\label{prop:main}
    For every $n\ge4$ there is an $n$-player $2$-action game $g^*_n$ with integer payoffs and a unique NE $y^n=(y^n_1,\ldots,y^n_n)$ such that for all $i\in[n]$:
    \begin{enumerate}
        \item $\algdeg{y^n_i}=\drn$,
        \item the Galois group of the Galois closure of $\Q(y^n_i)/\Q$ is $S_\drn$,
        \item the minimal polynomial $P^n_i$ of $y^n_i$ is dense: $P^n_i(t)=\sum_{k=0}^\drn b^n_{k,i}t^k$ where for every $k$, $b^n_{k,i}\ne0$.\footnote{In the proof we will fix $n$ and omit the $n$ superscript.}
    \end{enumerate}
\end{proposition}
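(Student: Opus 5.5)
The construction rests on the classical fact that a generic $n$-player $2$-action game has at most $\drn$ totally mixed equilibria, and that this bound is attained (McKelvey--McLennan). The number $\drn$ is the BKK / mixed-volume count of the polynomial system
\[
\sum_{S\subseteq[n]\setminus\{i\}} c_{i,S}\prod_{j\in S}x_j\prod_{j\notin S\cup\{i\}}(1-x_j)=0,\qquad i\in[n],
\]
which says that player $i$ is indifferent between its two actions. Equation $i$ is multilinear in the variables $x_j$, $j\ne i$.

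\textbf{Step 1: a good game $g_n$.} I would find a game whose indifference system has exactly $\drn$ isolated complex solutions, all nondegenerate. The solutions should form a single Galois orbit, and that orbit should have full symmetric monodromy. The tool is a monodromy/Hilbert irreducibility argument. Over the parameter space of payoff differences $c=(c_{i,S})$, the incidence variety $\{(c,x): x \text{ solves the system}\}$ is irreducible, since for fixed $x$ it is a linear space in $c$. Its projection to $c$-space is generically $\drn$-to-one by the McKelvey--McLennan count. So the monodromy group is transitive, and I would show it is $S_\drn$: it contains a simple transposition, coming from a generic point of the discriminant where exactly two solutions collide, and it is $2$-transitive. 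Double transitivity follows from irreducibility of the fibre product minus the diagonal, which again holds because the conditions are linear in $c$.

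\textbf{Step 2: from monodromy to integer payoffs.} Hilbert's irreducibility theorem gives integer $c$ for which the eliminant polynomial in $x_1$ is irreducible of degree $\drn$ with Galois group $S_\drn$. Applying it simultaneously to each coordinate $x_i$ gives the same conclusion for every $i$. Denseness of the minimal polynomial $P_i$ means that each coefficient, as a polynomial function of $c$, is not identically zero, so it avoids a Zariski-closed set. Hilbert sets intersected with complements of proper Zariski-closed sets are still nonempty, so all three properties can be achieved together.

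\textbf{Step 3: uniqueness of the NE and full mixedness.} I would also require that one of the $\drn$ solutions is real and lies in $(0,1)^n$, and that no other equilibrium exists: no other real solution in the open cube, and no pure or partially mixed equilibria. I would start from a fixed integer game with a unique fully mixed equilibrium that is a nondegenerate solution. A natural choice is the game $G_n$ from~\cite{orzech2025nash}, perturbed so that only one solution is real in the cube and boundary deviations are strict. Uniqueness of the equilibrium and strictness conditions are open in $c$. Hilbert sets are dense in a suitable sense: they meet every open box when scaled, since scaling payoffs does not change equilibria. So a rational, and hence after scaling an integer, $c$ can be chosen in the open neighbourhood and in the Hilbert set at once.

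\textbf{Main obstacle.} The hardest step is proving that the monodromy group is exactly $S_\drn$, specifically exhibiting a transposition. I would first look for a specialization of $c$ at which exactly two of the $\drn$ solutions merge into a simple double root while the rest stay simple. A cleaner alternative is a degeneration into a product-like system whose solution structure is known. A secondary difficulty is ensuring that an open set of games with a unique, fully mixed equilibrium exists at all, with that equilibrium isolated. For this I would rely on the earlier construction $G_n$ if its equilibrium is nondegenerate.
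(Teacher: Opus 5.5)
Your overall architecture matches the paper's: BKK count, generic Galois group $S_\drn$, an open set of unique-NE games around a nondegenerate anchor, and Hilbert irreducibility with density of Hilbert sets in $\R^N$. But the step you flag as the main obstacle is a genuine gap. Nothing in the proposal produces a transposition, so you only obtain a $2$-transitive group, and items 1--2 are not established.

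To get a transposition you must show that the branch locus has a codimension-one component whose generic point has exactly one double root of simple fold type. All other roots must stay simple, and none may escape to the boundary of the torus. For sparse systems roots can escape along faces of the Newton polytopes, so this is not a formality. It is exactly the content of Esterov's theorem, and the paper does not redo it by hand. It checks that the support tuple $S_i=\{0,1\}^{i-1}\times\{0\}\times\{0,1\}^{n-i}$ is reduced, since $0,e_j\in S_i$ for $j\ne i$. It checks irreducibility: a single $S_i$ spans rank $n-1>1$, and any two together span $\Z^n$. It then applies Esterov's Theorem~1.5 to get monodromy $S_\drn$, and hence Galois group $S_\drn$ over $\Q(c)$.

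Your $2$-transitivity sketch also needs a dimension count. When two roots differ in exactly one coordinate $i$, their two linear conditions on the block $(c_{i,s})_s$ coincide, so the fibre jumps. This locus has dimension $N-n+2<N$ and does not dominate, so the argument survives.

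Three further points are assumed rather than proved.

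\textbf{Distinct coordinates.} Hilbert irreducibility applied to the eliminant in $x_i$ needs the $\drn$ generic roots to have pairwise distinct $i$-th coordinates; otherwise that eliminant is not separable of degree $\drn$. The paper gets this from $2$-transitivity: the relation ``same $x_i$'' is monodromy-invariant, hence trivial or universal, and it is not universal for some $i_0$. The symmetry relabeling players $i\leftrightarrow i_0$ then transfers this to every $i$.

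\textbf{Density.} For item 3 you need that no coefficient of the generic eliminant vanishes identically in $c$. This is not automatic, since a symmetry of the family could force it, and you only state it. The paper's argument is that the full multi-affine family is invariant under $x\mapsto x+\lambda$. This turns $P_i(t)$ into $P_i(t+\lambda_i)$, whose $t^k$-coefficient is a polynomial in $\lambda_i$ with leading term $\binom{\drn}{k}\lambda_i^{\drn-k}\ne0$.

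\textbf{Anchor game.} Your anchor is conditional (``if its equilibrium is nondegenerate''). The paper uses the explicit cyclic game $f_i=2x_{i+1}-1$ for $i<n$ and $f_n=1-2x_1$. Its unique NE $(\frac12,\ldots,\frac12)$ has a signed cyclic permutation matrix as Jacobian, after which your implicit-function, compactness and density argument goes through.
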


We defer the question of upper bounds to a later version of this note.
$y^n_i$ is irradical for all $n\ge4$ and $i\in[n]$ because $S_\drn$ is an unsolvable group for all $n\ge4$.

As discussed in~\cite{orzech2025nash}, there are computational implications of this result.
To play the NE $y^n$, each player~$i$ must be able to sample an action from the underlying probability distribution $(y^n_i,1-y^n_i)$.
Since $y^n_i$ is irrational, performing exact sampling from $(y^n_i,1-y^n_i)$ appears to require generating a power series representation of $y^n_i$~\cite{flajolet2011buffon,mendo2020simulating}. One way to generate the series requires storing and evaluating a polynomial $f$ that satisfies $f(y^n_i)=0$. Since $\algdeg{y^n_i}=\drn$, $f$ would have $\deg f\ge\drn$ and typically take up $\Omega(\drn)$ space. Since $y^n_i$ is irradical, storing and evaluating such an $f$ appears to be necessary for player $i$ to play the only NE mixed strategy in the game $g^*_n$. Irradicality precludes the existence of a closed form representation in radicals for $y^n_i$, a representation that in general may be more compact compared to the minimal polynomial size (for example $a=\frac{1}{2}\sqrt{\frac{1}{2}+\sqrt{\frac{1}{2}+\sqrt{\frac{1}{2}}}}$ has the minimal polynomial $m_a=4096x^8-2048x^6+128x^4+32x^2-7$)~\cite{dixon1968solvable}.

Item $3$ of Proposition~\ref{prop:main} implies that the minimal polynomial of each $y^n_i$ requires $\Omega(\drn)$ space to store (without instance-specific optimizations), making the task of playing the only NE of $g^*_n$ computationally intensive for all players.\footnote{One could ask whether a minimal polynomial with many big coefficients has a multiple which has many
zero coefficients. However, \citet{Giesbrecht2012} suggest that a sparse multiple can have a much higher
degree, and its nonzero coefficients could be even larger.} Without item $3$, the minimal polynomial may be very easy to store and evaluate: Selmer's polynomial of degree $\drn$, $t^\drn-t-1$, is irreducible and has Galois group $S_\drn$ yet is sparse~\cite{selmer1956,osada1987galois}.

\section{Use of AI}\label{sec:ai}
We used ChatGPT 5.5 Pro to find this proof as follows. We gave the AI a prompt containing the statement in Proposition~\ref{prop:main} and asking it to prove the statement. It output a solution as a proof sketch, containing most of the technical tools and details required for the full proof. Afterward, we used further prompting to get more details about the solution. We verified its correctness and wrote up the full proof presented in this note. Below is a prompt that generated the solution.
\begin{lstlisting}
prove:

for every n>=4 there is an n-player 2-action game with integer payoffs and a unique NE y=(y_1,...,y_n) (y_i=Pr(i plays first action)) such that:
1. [Q(y_i):Q]=!n the derangement number
2. the galois group of y_i is S_!n
3. the minimal polynomial P_i of y_i has all its !n+1 coefficients nonzero
\end{lstlisting}
The generated solution output to the prompt is in Appendix~\ref{app:app} (lightly formatted with links removed). The proof we present is our formalized write-up based on that solution.

\section{Definitions}

We refer to~\cite{roughgarden2010algorithmic,orzech2025nash} for the full details regarding background on normal form games and NEs and~\cite{artin2011algebra,cox1997ideals,cox1998using} for a full algebra background. An $n$-player $2$-action game $G$ is defined by $n$ payoff functions, $u_1,\ldots,u_n:\zo^n\to\R$ which are extended to $[0,1]^n$ as $u_i(x)=\E_{a\sim x}(u_i(a))$. $G$ has \emph{integer payoffs} if $u_i(x_1,\ldots,x_n)\in\Z$ for every $i$ and $x_1,\ldots,x_n\in\zo$. A NE is defined in the usual way. Given $G$ we have $n$ multi-affine polynomials defined as $f_i(x_{-i})=u_i(x_{-i},x_i=1)-u_i(x_{-i},x_i=0)$. $f_i(x_{-i})$ or in short $f_i(x)$ is the payoff advantage the first action has over the second action for player~$i$ given that the other players play $x_{-i}$. We have the following useful lemma:

\begin{lemma}\label{lmm:ne-poly-cond}
  If $x\in\NE(G)$ then for every $i$,
  \begin{align}\label{eq:poly-cond}
    f_i(x_{-i})>0&~\Rightarrow~x_i=1,\nonumber\\
    f_i(x_{-i})<0&~\Rightarrow~x_i=0,\\
    0<x_i<1&~\Rightarrow~f_i(x_{-i})=0.\nonumber
  \end{align}
\end{lemma}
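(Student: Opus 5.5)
The statement to prove is Lemma~\ref{lmm:ne-poly-cond}. The plan is to use the fact that each $u_i$ is multi-affine in $x_i$, and then use the best-response characterization of an NE.

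Fix $x\in\NE(G)$ and a player $i$. Because $u_i$ is the expectation of $u_i(a)$ over the product distribution $a\sim x$, and $a_i$ is independent of $a_{-i}$, conditioning on $a_i$ gives
\begin{align*}
  u_i(x_{-i},x_i) = x_i\,u_i(x_{-i},x_i=1) + (1-x_i)\,u_i(x_{-i},x_i=0).
\end{align*}
Subtracting $u_i(x_{-i},x_i=0)$ from both sides, this becomes
\begin{align*}
  u_i(x_{-i},x_i) = u_i(x_{-i},x_i=0) + x_i f_i(x_{-i}).
\end{align*}
So, with $x_{-i}$ held fixed, player $i$'s payoff is an affine function of $x_i\in[0,1]$ with slope $f_i(x_{-i})$.

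By the definition of NE, $x_i$ maximizes $t\mapsto u_i(x_{-i},t)$ over $t\in[0,1]$. Maximizing an affine function on $[0,1]$ yields the three implications directly.
\begin{itemize}
  \item If $f_i(x_{-i})>0$, the function is strictly increasing, so the only maximizer is $t=1$. Hence $x_i=1$, since any $x_i<1$ could be improved by deviating to $1$.
  \item If $f_i(x_{-i})<0$, the function is strictly decreasing, so the only maximizer is $t=0$, and $x_i=0$.
  \item The third implication is the contrapositive of the first two combined. If $0<x_i<1$, then neither $x_i=1$ nor $x_i=0$ holds, so $f_i(x_{-i})$ can be neither positive nor negative. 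Hence $f_i(x_{-i})=0$.
\end{itemize}

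There is no real obstacle here. The only point needing care is the first step, the affine decomposition. It uses that $u_i$ on $[0,1]^n$ is defined as the expectation over the product measure, so $x_i$ enters linearly.
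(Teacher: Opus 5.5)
Your proof is correct. The paper states this lemma without any proof and treats it as standard background, deferring to its cited references. Your argument is exactly the standard justification it relies on: independence under the product distribution gives $u_i(x_{-i},x_i)=u_i(x_{-i},0)+x_i f_i(x_{-i})$, and you then maximize this affine function of $x_i$ over $[0,1]$, getting the third implication by combining the contrapositives of the first two.
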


Here, we are interested in fully mixed NEs, in which $x_i\in(0,1)$ for all $i$. The set of fully mixed NEs of $G$ is exactly $\s{x\in(0,1)^n\mid\forall i\in[n]~~f_i(x)=0}$.

\paragraph{Games as equation systems:}
Define the generic system $\sys(c,x)$ as follows: there is a coefficient vector $c=(c_{i,s})_{i\in[n],~s\subseteq[n]\setminus\s{i}}$ which is either \emph{generic} (a vector of symbols) or \emph{in $\R^N$} where $N=n2^{n-1}$, a variable vector $x=(x_1,\ldots,x_n)$ that will take values in $\C^n$ and polynomials $f_1,\ldots,f_n\in\Q(c)[x]$ of the form
\begin{align*}
    f_i(c,x)=\sum_{s\subseteq[n]\setminus\s{i}}c_{i,s}\prod_{j\in s}x_j.
\end{align*}
When $c$ is generic, we may still set some $c_{i,s}$ to $0$.
We write $f_i^c\coloneqq f_i(c,\cdot)$.
Given $c$, each polynomial~$f_i^c$ has monomial exponents with support $S_i=\s{s\mid c_{i,s}\ne0}\subseteq\zo^{i-1}\times\s{0}\times\zo^{n-i}$. The monomial exponent supports $S_i$ of the generic $\sys(c,x)$ are $S_i=\zo^{i-1}\times\s{0}\times\zo^{n-i}$ for all $i$.

Every $c$ (either generic or real) defines a game $\game(c)$ uniquely up to affine transformations, which do not affect the game's best responses or NEs. In particular if $c\in\Q^N$ then the payoffs in $\game(c)$ can be scaled to be integers. Therefore we sometimes identify games as vectors in $\R^N$. Furthermore, the set of fully mixed NEs of $G(c)$ is $\s{x\in(0,1)^n\mid\sys(c,x)=0}$.

\subsection{Proof of Proposition~\ref{prop:main}}

The proof consists of $4$ steps. In the first, we show that the generic $\sys$ has $\drn$ solutions in $x\in(\C\setminus\s{0})^n$, and the field extension of $\Q(c)$ by all the coordinates of these $\drn$ roots has Galois group~$S_\drn$.

\begin{proposition}[Solution count of $\sys$]\label{prop:generic sol count}
    Suppose $c_{i,s}$ are generic. The number of solutions in $x\in(\C\setminus\s{0})^n$ to $\sys(c,x)=0$ is $\le\drn$, and $=\drn$ if $c_{i,s}\ne0$ for all $i,s$.
\end{proposition}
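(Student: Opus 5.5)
Since each $f_i^c$ has full support $S_i=\zo^{i-1}\times\s{0}\times\zo^{n-i}$, the natural tool is Bernstein's theorem (BKK). The number of isolated solutions in the torus $(\C\setminus\s{0})^n$ is bounded by the mixed volume $\mv(Q_1,\ldots,Q_n)$ of the Newton polytopes. Equality holds for generic coefficients, meaning coefficients avoiding a proper Zariski-closed set, and symbolic coefficients avoid it automatically. When some $c_{i,s}$ are set to $0$, the Newton polytopes shrink, so by monotonicity of mixed volume the count can only decrease. This gives the $\le$ part. We must also make sure there are no positive-dimensional solution components in the generic case; Bernstein's theorem guarantees finiteness for generic coefficients. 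So the whole proposition reduces to computing
\[
\mv(Q_1,\ldots,Q_n)=\drn,\qquad Q_i=\s{x\in[0,1]^n : x_i=0}.
\]

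To compute this mixed volume, the cleanest route is to count solutions of a well-chosen specialization, avoiding polytope combinatorics. A standard fact is that the mixed volume of faces of the unit cube of this type equals the permanent-type count: for polytopes that are products of coordinate segments, $\mv(\prod_{j\in A_1}[0,e_j],\ldots,\prod_{j\in A_n}[0,e_j])$ equals the number of bijections $\sigma$ with $\sigma(i)\in A_i$. Each $Q_i$ is the Minkowski sum $\sum_{j\ne i}[0,e_j]$. By multilinearity of mixed volume in Minkowski sums, $\mv(Q_1,\ldots,Q_n)$ expands into a sum over choices of one segment $[0,e_{\sigma(i)}]$ per $Q_i$, with $\sigma(i)\ne i$. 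Terms whose segments are not linearly independent vanish. Each term with $\sigma$ a bijection contributes $\mv([0,e_{\sigma(1)}],\ldots,[0,e_{\sigma(n)}])=|\det(e_{\sigma(i)})|=1$.

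The care point is the expansion. Mixed volume is multilinear in Minkowski sum, but a sum $\sum_{j\ne i}[0,e_j]$ in one argument expands only once per argument. The argument must repeat the same segment across different arguments. I would do this explicitly: $\mv(\sum_j P_{1j},\ldots,\sum_j P_{nj})=\sum_{j_1,\ldots,j_n}\mv(P_{1j_1},\ldots,P_{nj_n})$. For segments $[0,v_k]$, the mixed volume is $|\det(v_1,\ldots,v_n)|$ (the normalization where $\mv(P,\ldots,P)=n!\vol(P)$ differs; one must use the normalization matching Bernstein, under which $\mv$ of $n$ segments is $|\det|$). The terms are therefore $1$ exactly for derangements $\sigma$ and $0$ otherwise, giving the total $\drn$.

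I expect the main obstacle to be pinning down this normalization. As a cross-check, I would verify directly with the unmixed case that $\mv(C,\ldots,C)=n!$ for $C=[0,1]^n$: expanding gives all $n!$ permutations, matching $n!\vol(C)$.

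An alternative sanity check, if the normalization is doubtful, is an algebraic count. Solve the multilinear system by viewing the generic solution count of $n$ multiaffine equations with variable-set structure $\{[n]\setminus\{i\}\}$ as a multihomogeneous Bézout number on $(\mathbb{P}^1)^n$. That number is the coefficient of $\prod_j h_j$ in $\prod_i\sum_{j\ne i}h_j$, which is the number of derangements. Generic multiaffine systems have all solutions in the torus, so this count agrees with Bernstein's count. I would present the Bézout coefficient computation as the main argument and mention BKK for the torus statement and the $\le$ bound under zeroed coefficients.
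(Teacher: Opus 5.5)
Your proof is correct, and its skeleton (BKK for the torus count, monotonicity of mixed volume for the $\le$ part, then evaluating the mixed volume as $\drn$) matches the paper's. The difference is in how you evaluate the mixed volume.

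The paper does not decompose the $\Delta_i$ into segments. It notes that $\lambda_1\Delta_1+\cdots+\lambda_n\Delta_n$ is the box $\prod_k[0,\sum_{i\ne k}\lambda_i]$, whose volume is $\prod_k\sum_{i\ne k}\lambda_i$. The coefficient of $\lambda_1\cdots\lambda_n$ in this polynomial is visibly the number of derangements, and it is exactly the Bernstein-normalized quantity $n!\mv(\Delta_1,\ldots,\Delta_n)$. This settles your normalization worry automatically and needs neither Minkowski multilinearity nor the determinant formula for segments.

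Your segment expansion is also valid, and it gives the more general permanent formula for sums of coordinate segments. The displayed formula is right, though the surrounding ``care point'' remarks are muddled.

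Your preferred main route, multihomogeneous B\'ezout on $(\mathbb{P}^1)^n$, extracts literally the same coefficient from $\prod_i\sum_{j\ne i}h_j$. However, it counts points of $(\mathbb{P}^1)^n$, not of the torus. So you would have to prove, rather than assert, that generic solutions are simple and avoid $x_j\in\{0,\infty\}$. One way: restricting to $x_j=0$ or $x_j=\infty$ leaves $n$ independent generic sections of basepoint-free bundles on $(\mathbb{P}^1)^{n-1}$, which have no common zero. And, as you note, you still need BKK for the bound when coefficients are zeroed. The paper's BKK-only route gets both the torus count and the inequality from one theorem, so it is the more economical write-up.
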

\begin{proof}
    Each $f_i$'s support of monomial exponents is $S_i\subseteq\zo^{i-1}\times\s{0}\times\zo^{n-i}$, and its associated Newton polytope is $\delta_i\subseteq\Delta_i\coloneqq[0,1]^{i-1}\times\s{0}\times[0,1]^{n-i}$.
    Therefore by the BKK theorem~\cite{bernshtein1975number,kouchnirenko1976polyedres} the number of solutions to $\sys$ (in the generic case) is $n!\mv(\delta_1,\ldots,\delta_n)$ where $\mv$ is the mixed volume function. By monotonicity, $n!\mv(\delta_1,\ldots,\delta_n)\le n!\mv(\Delta_1,\ldots,\Delta_n)$.

    To calculate the rhs we refer to the function
    \begin{align*}
        V(\lambda_1,\ldots,\lambda_n)\coloneqq\vol_n(\lambda_1\Delta_1+\ldots+\lambda_n\Delta_n)=\sum_{i_1,\ldots,i_n=1}^n\mv(\Delta_{i_1},\ldots,\Delta_{i_n})\prod_{l=1}^n\lambda_{i_l}.
    \end{align*}
    Let $\sum\lambda_{-i}=\sum_{j\ne i}\lambda_j$.
    \begin{align*}
        \vol_n(\lambda_1\Delta_1+\ldots+\lambda_n\Delta_n)=\vol_n\p{\b{0,\sum\lambda_{-1}}\times\ldots\times\b{0,\sum\lambda_{-n}}}=\prod_i\sum_{j\ne i}\lambda_j.
    \end{align*}
    The upper bound $n!\mv(\Delta_1,\ldots,\Delta_n)$ is the coefficient of $\prod_i\lambda_i$ in the rhs which is $\drn$.

    Equality is obtained if $\delta_i=\Delta_i$ for all $i$, so $c_{i,s}\ne0$ for all $i,s$.
\end{proof}

Let $r=\s{r^1,\ldots,r^\drn}$ be the set of solutions to $\sys$ where $r^k=(r^k_1,\ldots,r^k_n)$. We also denote them sometimes by $r(c)$.

\begin{proposition}[Galois group of generic roots]\label{prop:generic galois group}
    Suppose $c_{i,s}\ne0$ and generic for all $i,s$.
    \begin{align*}
        \gal(\Q(c)(r^1_1,\ldots,r^\drn_n)/\Q(c))\cong\sdrn.
    \end{align*}
\end{proposition}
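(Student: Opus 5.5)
The natural route is the standard one for generic polynomial systems: the monodromy group of the solutions over the coefficient space is the full symmetric group, and the algebraic Galois group coincides with this monodromy group.

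\emph{Step 1: reduction to monodromy.} Let $X=\s{(c,x)\mid \sys(c,x)=0,~x\in(\C\setminus\s{0})^n}$ and let $\pi:X\to\C^N$ be the projection. By Proposition~\ref{prop:generic sol count}, over a Zariski-open dense set $U\subseteq\C^N$ (all $c_{i,s}\ne0$, discriminant nonzero) the map $\pi$ is a $\drn$-sheeted covering. By the standard correspondence (Harris, ``Galois groups of enumerative problems''), if $X$ is irreducible, then the Galois group of the splitting field $\Q(c)(r^1_1,\ldots,r^\drn_n)$ over $\C(c)$ equals the monodromy group of $\pi|_U$. Over $\Q(c)$ the group can only be larger, and it is in any case a subgroup of $\sdrn$ acting on the $\drn$ roots. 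Any automorphism is determined by its action on $r^1,\ldots,r^\drn$, since the coordinates generate the field. So it suffices to show that the monodromy group is all of $\sdrn$.

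\emph{Step 2: irreducibility and transitivity.} The system is linear in $c$: for fixed $x\in(\C^*)^n$, each equation $f_i=0$ is a single nonzero linear condition on the block $(c_{i,s})_s$, because the coefficient of $c_{i,\emptyset}$ is $1$. Hence $X$ is a vector bundle of rank $N-n$ over the torus $(\C^*)^n$. It is therefore irreducible, and the monodromy group is transitive.

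\emph{Step 3: double transitivity, then a simple transposition.} Apply the same linear argument to the fibre product $X\times_{\C^N}X$ minus the diagonal. For distinct $x\ne x'$, the conditions on block $i$ are given by the vectors $(\prod_{j\in s}x_j)_s$ and $(\prod_{j\in s}x'_j)_s$. These are linearly independent unless $x_{-i}=x'_{-i}$. If that coincidence holds for every $i$ and $n\ge2$, then $x=x'$. The set where some block degenerates has to be handled by dimension counting; the main component has dimension $N$ and dominates. I expect this step to be messy, and I would instead quote the Esterov criterion (``Galois theory for general systems of polynomial equations'', 2019). For a system with prescribed supports $S_1,\ldots,S_n$, the Galois group is $S_{\text{MV}}$ exactly when the tuple is \emph{non-reducible}, i.e. the supports do not all lie in translates of a proper sublattice, and \emph{non-composite}, i.e. no $k<n$ of them span a $k$-dimensional sublattice whose mixed volume count is strictly between. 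Here $S_i$ consists of all $0/1$ vectors with $0$ in coordinate $i$.
\begin{itemize}
\item \emph{Non-reducibility.} Each $S_i$ contains $0$ and all $e_j$ for $j\ne i$. So $\sum_i(S_i-S_i)$ generates $\Z^n$.
\item \emph{Non-compositeness.} For any nonempty $I\subsetneq[n]$, the supports $\s{S_i}_{i\in I}$ span a lattice of rank at least $\min(n,|I|+1)>|I|$ when $|I|<n-1$. This holds because $\bigcup_{i\in I}S_i$ contains $e_j$ for all $j$ once $|I|\ge2$, and contains $n-1$ basis vectors when $|I|=1$. For $|I|=n-1$, the span is already $\Z^n$ as soon as $n\ge3$. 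So no subsystem of $k$ equations lives in a rank-$k$ lattice, and the tuple is non-composite.
\end{itemize}
Esterov's theorem then gives monodromy $S_{\drn}$ for generic coefficients, and Step 1 transfers this to the Galois group over $\C(c)$, hence over $\Q(c)$.

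\emph{Main obstacle.} The main obstacle is verifying that Esterov's hypotheses apply precisely in our setting. His theorem counts solutions in the torus, which matches Proposition~\ref{prop:generic sol count}. The degenerate cases $n\le3$ are excluded by $n\ge4$, which also guarantees $\drn\ge9>1$. If one wants a self-contained argument instead, the fallback is to exhibit a transposition. Find a codimension-one locus of $c$ where exactly two solutions collide simply, e.g.\ by a one-parameter degeneration near a point where the restriction to a $2$-player face gives a quadratic. Combined with double transitivity from Step 3, this yields $\sdrn$ by the classical lemma that a doubly transitive group containing a transposition is symmetric.
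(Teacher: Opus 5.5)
Your proposal is correct and follows the paper's route. You check that the support tuple is reduced and irreducible, apply Esterov's theorem to get monodromy $\sdrn$, identify the monodromy group with $\gal(\C(c)(r)/\C(c))$ (citing Harris where the paper cites Brysiewicz et al.), and then embed that group by restriction into $\gal(\Q(c)(r)/\Q(c))\le\sdrn$. The vector-bundle irreducibility argument and the transposition fallback are not needed, and in Step~1 the extension over $\C(c)$ should be $\C(c)(r)$ rather than $\Q(c)(r)$.
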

\begin{proof}
    We use \cite[Theorem 1.5]{esterov2019galois}. First we need to prove that $S=(S_1,\ldots,S_n)$ is reduced and irreducible. We view each $S_i$ as a subset of $\Z^n$.

    $S$ is reduced if there are no $b_1,\ldots,b_n\in\Z^n$ such that $S_1+b_1,\ldots,S_n+b_n$ are all contained in a proper sublattice of $\Z^n$. Since $0,e_j\in S_i$ for all $i\ne j$, every sublattice that contains $S_1+b_1,\ldots,S_n+b_n$ contains $e_1,\ldots,e_n$ (by taking differences). Therefore $S$ is reduced.

    $S$ is irreducible if there are no $m>0$ and a subset $\s{S_{i_1},\ldots,S_{i_m}}\subseteq\s{S_1,\ldots,S_n}$ that can be shifted into a sublattice of $\Z^n$ of dimension $m$. In the first case, $m=1$. Every $S_i$ contains $0$ and $n-1$ of the $n$ standard basis elements of $\Z^n$, so the shifted $S_i$ can only fit into a sublattice of dimension at least $n-1>1$. In the second case, $m>1$. The union of every pair $S_i,S_j$ ($i\ne j$) contains $0$ and all of $e_1,\ldots,e_n$. Therefore the shifted $S_{i_1},\ldots,S_{i_m}$ can only fit into a sublattice of dimension $n$ (by considering differences of elements in these tuples), namely $\Z^n$. Therefore $S$ is irreducible.

    Therefore, from \cite{esterov2019galois} the monodromy group of $\s{r^k}_k$ is $G\cong S_{n!\mv(\Delta_1,\ldots,\Delta_n)}=\sdrn$ (last equality by Proposition~\ref{prop:generic sol count}).

    Lastly we prove that $G\cong\gal(\Q(c)(r^1_1,\ldots,r^\drn_n)/\Q(c))$. Write in short $\Q(c,r)$ and $\C(c,r)$. First observe that $\Q(c,r)/\Q(c)$ is a Galois extension, because $r=\s{r^k_i}_{k,i}$ contains all the conjugates of the $r^k_i$ by definition (as the set/tuple of all solution coordinates to $\sys$). Now, using~\cite{brysiewicz2021solving}, $G\cong\gal(\C(c,r)/\C(c))$. 
    Let $\sigma\in\gal(\C(c,r)/\C(c))$. Then $\sigma\restriction_{\Q(c)}=id_{\Q(c)}$, $\sigma(\Q(c,r))=\Q(c,r)$ and $\sigma\restriction_{\Q(c,r)}\in\gal(\Q(c,r)/\Q(c))$ because $\sigma$ permutes the root tuples $r^1,\ldots,r^\drn$. Also, if $\sigma,\tau\in\gal(\C(c,r)/\C(c))$ and $\sigma\restriction_{\Q(c,r)}=\tau\restriction_{\Q(c,r)}$ then $\sigma,\tau$ agree on $r$. They both fix $\C(c)$ so we get $\sigma=\tau$. Therefore $\gal(\C(c,r)/\C(c))\lesssim\gal(\Q(c,r)/\Q(c))$. Overall:
    \begin{align*}
        S_\drn\cong G\lesssim\gal(\Q(c,r)/\Q(c))\lesssim S_\drn.
    \end{align*}
    So $\gal(\Q(c,r)/\Q(c))\cong\sdrn$.
\end{proof}

In the second step, we create a simple game $\game(c_0)$ with a unique NE $\p{\frac12,\ldots,\frac12}$. Then we find an open neighborhood $c_0\in U\subseteq\R^N$ such that every $\game(c)$ with $c\in U$ also has a unique NE.

\begin{proposition}[Simple anchor game with open neighborhood of unique fully mixed NE games]
    There is an $n$-player 2-action game $g_0=\game(c_0)$ with a unique NE $\p{\frac12,\ldots,\frac12}$ and an open set $U\subseteq\R^N$ such that $c_0\in U$ and for every $c\in U$, $\game(c)$ has a unique NE.
\end{proposition}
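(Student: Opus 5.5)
We will take $g_0$ to be a cyclic version of matching pennies, check by direct case analysis that its only NE is $\p{\frac12,\ldots,\frac12}$, and then obtain $U$ from two facts: the NE correspondence has closed graph, and the implicit function theorem applies at a nondegenerate fully mixed equilibrium. Concretely, we choose $c_0$ so that
\begin{align*}
f_i^{c_0}(x)=2x_{i+1}-1\quad(1\le i<n),\qquad f_n^{c_0}(x)=1-2x_1 .
\end{align*}
Each $f_i^{c_0}$ is multi-affine in $x_{-i}$ with rational coefficients, so it is of the required form, and $\game(c_0)$ can be scaled to integer payoffs.

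\textbf{Uniqueness for $g_0$.} Let $x\in\NE(g_0)$ and suppose some $x_j\ne\frac12$. By Lemma~\ref{lmm:ne-poly-cond}, the player whose payoff advantage depends on $x_j$ (player $j-1$, cyclically) is then pure, that is, $x_{j-1}\in\zo$, so again $x_{j-1}\ne\frac12$. Iterating around the cycle, every coordinate is pure and satisfies the best-response relations
\begin{align*}
x_i=[x_{i+1}=1]\quad(i<n),\qquad x_n=[x_1=0].
\end{align*}
Composing these relations once around the cycle gives $x_1=1-x_1$, which is a contradiction. Hence $x=\p{\frac12,\ldots,\frac12}$. This point is indeed an NE, since there all $f_i^{c_0}=0$. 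Moreover, the Jacobian $\p{\partial f_i^{c_0}/\partial x_j}$ is a signed cyclic permutation matrix with entries $\pm2$, so its determinant is $\pm2^n\ne0$.

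\textbf{Openness.} Fix a closed ball $B\subseteq(0,1)^n$ around $p=\p{\frac12,\ldots,\frac12}$, small enough that the implicit function theorem applies. Since $f(c,x)$ is polynomial in $(c,x)$ and $\det D_xf(c_0,p)\ne0$, there are an open $U_1\ni c_0$ and such a ball $B$ with the following property: for every $c\in U_1$, $f^c$ has exactly one zero in $B$. Because $B\subseteq(0,1)^n$, any NE of $\game(c)$ lying in $B$ is fully mixed, and hence is a zero of $f^c$. Next we claim there is an open $U_2\ni c_0$ such that every NE of $\game(c)$ with $c\in U_2$ lies in $B$. Suppose not. Then there are $c_k\to c_0$ and $x^k\in\NE(\game(c_k))\setminus\operatorname{int}B$. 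By compactness of $[0,1]^n$, a subsequence converges to some $x^*\notin\operatorname{int}B$. The NE conditions are closed conditions (the payoffs $u_i$ are continuous in $(c,x)$), so $x^*\in\NE(g_0)$, contradicting uniqueness.

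Take $U=U_1\cap U_2$. For $c\in U$, Nash's theorem gives at least one NE; every NE lies in $B$ and is therefore a zero of $f^c$ in $B$, and there is only one such zero. Hence the NE is unique, and it is fully mixed.

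The main obstacle is the closed-graph compactness step, because uniqueness must exclude pure and partially mixed equilibria of the perturbed games, not only other fully mixed ones. I would make sure to phrase the NE condition as the inequalities $x_i f_i(c,x)\ge0$ for the pure action $1$ and $(1-x_i)\,(-f_i(c,x))\ge0$ for the pure action $0$, which are stable under limits. A secondary point is that the identification of games with vectors $c$ fixes only the differences $f_i$; the payoffs themselves can be chosen to depend continuously on $c$, so this causes no difficulty.
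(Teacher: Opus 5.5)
Your proposal is correct and follows essentially the same route as the paper: the same cyclic matching-pennies anchor game $f_i=2x_{i+1}-1$, $f_n=1-2x_1$, the same invertible signed-cyclic Jacobian feeding the implicit function theorem, and the same compactness/closed-graph argument to exclude extra (including boundary) equilibria for nearby $c$. Your version packages this through a closed ball $B\subseteq(0,1)^n$ and $U=U_1\cap U_2$ (every NE lies in $B$, and $B$ contains exactly one zero), which is a slightly cleaner form of the paper's contradiction against $x^m\ne\phi(c_m)$. Your propagation argument for uniqueness in $g_0$ (any coordinate $\ne\frac12$ forces purity all around the cycle) is equivalent to the paper's case analysis on pure coordinates.
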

\begin{proof}
    Define the game through the polynomials $f_i(x)=2x_{i+1}-1$ for $i<n$ and $f_n(x)=1-2x_1$. Let $x\in\NE$. If $x_{i+1}=0$ then $f_i=-1<0$ so $x_i=0$ so we get $x_1=\ldots=x_{i+1}=0$ and then $f_n=1>0$ so $x_n=1$ so $f_{n-1}=1>0$ so $x_{n-1}=1$. Therefore we also get $x_{i+1}=\ldots=x_n=1$ -- a contradiction. Similarly we get a contradiction if $x_{i+1}=1$ or $x_1\in\zo$. Therefore all NEs are fully mixed and the unique solution to $f_1=\ldots=f_n=0$ is $x^0_1=\ldots=x^0_n=\frac12$.

    Now we prove the existence of $U$.
    Define $F:\R^{N+n}\to\R^n$ by $F(c,x)=(f_1^c(x),\ldots,f_n^c(x))$. The matrix $\b{\frac{\partial f_i}{\partial x_j}(g_0,x^0)}_{i,j}$ is invertible:
    \begin{align*}
        \frac{\partial f_i}{\partial x_j}(g_0,x^0)=\frac{\partial f^{c_0}_i}{\partial x_j}\p{\frac12,\ldots,\frac12}=\begin{cases}
            2_{j=i+1}&i<n\\
            -2_{j=1}&i=n
        \end{cases}
    \end{align*}
    meaning the matrix is
    \begin{align*}
        \left(\begin{array}{c|c}
            \bf0&2I_{n-1}\\\hline
            -2&\bf0
        \end{array}\right)
    \end{align*}
    which is invertible. The implicit function theorem gives open neighborhoods $U\ni c_0$ in $\R^N$ and $W\ni x^0$ in $\R^n$, and a unique function $\phi:U\to W$ such that for all $c\in U$, $\phi(c)$ is the unique solution in $W$ to $f_1^c(x)=\ldots=f_n^c(x)=0$. This means that $\phi(c)$ is a fully mixed NE (in $W$) of $\game(c)$. Wlog take $W\subseteq(0,1)^n$.

    Now we show that wlog $U$ contains only games with a unique and fully mixed NE. Suppose otherwise. Then there is $c_m\to c_0$ and $x^m\in\NE(\game(c_m))$ such that $x^m\ne\phi(c_m)$. Since $x^m\in[0,1]^n$, there is a subsequence $x^{m_k}\to\Tilde x$. The NE inequalities are closed, and $(c,x)\to f(c,x)$ is continuous in $c,x$, so $\Tilde x\in\NE(g_0)$. Since $g_0$ has a unique NE, $\Tilde x=x^0$ so for all large enough $k$, $x^{m_k}\in W$ and $x^{m_k}$ is fully mixed, meaning $f_1^{c_{m_k}}(x^{m_k})=\ldots=f_n^{c_{m_k}}(x^{m_k})=0$. By the uniqueness of $\phi$ in $W$, $x^{m_k}=\phi(c_{m_k})$ -- a contradiction. Therefore after shrinking $U$, every $\game(c)$ with $c\in U$ has a unique NE, and it is fully mixed.
\end{proof}

Let $P_i(t)=\prod_{k=1}^\drn(t-r^k_i)$. Then there are polynomials $b_{k,i}(r^1_i,\ldots,r^\drn_i)=b_{k,i}(r(c))$ such that $P_i=\sum_{k=0}^\drn b_{k,i}t^k$. In the third step we show that $P_i$:
\begin{enumerate}
    \item is a degree-$\drn$ polynomial in $\Q(c)[t]$ with $b_{k,i}\ne0$ for all $k$ (as functions of $c$),
    \item is the minimal polynomial of $r^k_i$ for all $k$,
    \item has Galois group $S_\drn$.
\end{enumerate}

\begin{proposition}\label{prop:Pi in Qct}
    Suppose $c_{i,s}\ne0$ and generic for all $i,s$. Then $P_i\in\Q(c)[t]$.
\end{proposition}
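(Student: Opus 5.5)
The plan is to show that the coefficients $b_{k,i}$ of $P_i(t)=\prod_{k=1}^\drn(t-r^k_i)$ are invariant under every automorphism of $\Q(c,r)$ over $\Q(c)$, and hence lie in $\Q(c)$. Rather than appealing to Galois theory directly (which would presuppose that $\Q(c,r)/\Q(c)$ is Galois, itself a claim that the solution set is defined over $\Q(c)$), I would realize $P_i$ as an elimination polynomial: up to a nonzero factor in $\Q(c)$, it is a (sparse) resultant.

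Concretely, the steps are as follows.

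First, introduce a new variable $t$ and the augmented system $f_1^c(x)=\ldots=f_n^c(x)=0$, $x_i-t=0$. Eliminating $x$ gives a polynomial $R_i(t)\in\Q(c)[t]$, for instance the generator of the elimination ideal $\b{\langle f_1^c,\ldots,f_n^c,x_i-t\rangle\cap\Q(c)[t]}$ after saturating by $x_1\cdots x_n$ (to remove solutions at infinity or with zero coordinates). Since Gröbner basis and saturation computations over the field $\Q(c)$ never leave $\Q(c)$, this elimination polynomial has coefficients in $\Q(c)$.

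Second, identify the roots of $R_i$ with $\s{r^k_i}_k$. Its zeros over $\overline{\Q(c)}$ are exactly the $i$-th coordinates of the torus solutions, which by Proposition~\ref{prop:generic sol count} are $r^1,\ldots,r^\drn$.

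Third, prove that $R_i=P_i$, which requires two facts:
\begin{enumerate}
\item the system is radical (all $\drn$ solutions are simple), which holds for generic coefficients by BKK/Bernstein's theorem, since the count $\drn$ equals the mixed volume and generic solutions are nondegenerate;
\item the coordinates $r^k_i$ are pairwise distinct for fixed $i$.
\end{enumerate}
For distinctness, I would argue by genericity. Consider the discriminant of $R_i$ as a function of $c$. By Proposition~\ref{prop:generic galois group} the Galois/monodromy group acts as the full $\sdrn$, in particular $2$-transitively on the solutions. So if $r^1_i=r^2_i$ held identically, then $r^k_i=r^l_i$ would hold for all $k,l$, and $x_i$ would be constant over all solutions. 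That can be refuted, for example by exhibiting a single specialization of $c$ in which the solutions have distinct $i$-th coordinates. Alternatively one can argue that the $x_i$-values cannot all coincide, because the remaining equations then would not determine $\drn$ points.

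A cleaner alternative avoids Gröbner bases altogether: the elementary symmetric functions $e_k(r^1_i,\ldots,r^\drn_i)$ are fixed by the monodromy group, hence are single-valued algebraic functions of $c$, i.e.\ rational functions in $\C(c)$. They are also fixed by $\gal(\overline{\Q(c)}/\Q(c))$, because that group maps solutions of a system with $\Q(c)$-coefficients to solutions, so it permutes $\s{r^k}$. Hence $e_k\in\overline{\Q(c)}\cap$ the fixed field, which is $\Q(c)$. This is the argument I would actually write: any automorphism of $\overline{\Q(c)}$ fixing $\Q(c)$ preserves the solution set of $\sys(c,x)=0$ in the torus (a finite set of $\drn$ points, algebraic over $\Q(c)$ since it is zero-dimensional). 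It therefore permutes the multiset $\s{r^k_i}_k$, so each $b_{k,i}$ is fixed, and by perfectness of characteristic $0$ the fixed field is $\Q(c)$.

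The main obstacle is justifying that each $r^k_i$ is algebraic over $\Q(c)$, i.e.\ that the solution set is finite and zero-dimensional over $\overline{\Q(c)}$ and consists of exactly the $\drn$ points counted by Proposition~\ref{prop:generic sol count}, with no multiplicity issues. I would handle this by citing Bernstein's theorem for generic coefficients, which gives finiteness with exactly $\drn$ simple points.
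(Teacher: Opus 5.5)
Your committed argument is exactly the paper's proof: every $\sigma\in\gal(\overline{\Q(c)}/\Q(c))$ sends torus solutions of $\sys(c,x)=0$ to torus solutions, so it permutes $\{r^k_i\}_k$, fixes the symmetric coefficients $b_{k,i}$, and perfectness then puts them in $\Q(c)$. Your remark that the $r^k$ are algebraic over $\Q(c)$ via the generic BKK count is used implicitly in the paper, and the elimination-ideal route and distinctness discussion are not needed here, since the paper proves distinctness separately later using the $S_\drn$ Galois group.
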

\begin{proof}
    First observe that $b_{k,i}(r^1_i,\ldots,r^\drn_i)$ is symmetric in $r^1_i,\ldots,r^\drn_i$. $\Q(c)$ is a perfect field so we have the Galois group $G=\gal(\overline{\Q(c)}/\Q(c))$ where $\overline{\Q(c)}$ is the algebraic closure of $\Q(c)$. By definition $\sigma(r^k)=(\sigma(r^k_1),\ldots,\sigma(r^k_n))$. Let $\sigma\in G$. Since $r^k_i\in\overline{\Q(c)}$ we have for all $i,j$ $\sigma(f_i^c(r^k))=f_i^c(\sigma(r^k))=0$ (note the absence of the variables $x_1,\ldots,x_n$ from both sides), so $\sigma$ permutes $\s{r^1,\ldots,r^\drn}$ within itself ($\sigma^{-1}(0)=0$ so $\sigma$ does not send any $r^i_j\ne0$ to $0$). Therefore $\sigma$ also permutes each $\s{r^1_i,\ldots,r^\drn_i}$ within itself. Therefore $\sigma(b_{k,i})=b_{k,i}$. So $b_{k,i}$ is a fixed point of all $\sigma\in G$, so $b_{k,i}\in\Q(c)$. Therefore $P_i\in\Q(c)[t]$.
\end{proof}

Let $\lambda\in\Q^n$ and consider $\sys(c,x+\lambda)$. Then there is an invertible function $\psi_\lambda$ such that $\sys(c,x+\lambda)=\sys(\psi_\lambda(c),x)$. In particular every $f^{\psi_\lambda(c)}_i$ is still multi-affine and does not depend on $x_i$, meaning $\game(\psi_\lambda(c))$ is a valid game.

\begin{proposition}[Dense generic $P_i$]\label{prop:dense}
    For every $k,i$, the polynomials $b_{k,i}(r)\in\Q(c)[r]$ are not $0$.
\end{proposition}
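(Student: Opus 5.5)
The plan is to read $b_{k,i}(r)$ as $\pm e_{\drn-k}(r^1_i,\ldots,r^\drn_i)$, the elementary symmetric polynomials (with the usual signs) of the $i$-th coordinates of the solutions, and to show that none of them vanishes identically once we are free to apply the shifts $\psi_\lambda$. The point is that $b_{k,i}(r)$ as an abstract polynomial in formal variables $r$ is trivially nonzero. What matters is that it stays nonzero when $r=r(c')$ is the actual solution set of a system $\sys(c',x)$ that is still a valid game. The shifts supply exactly this freedom. Since $\sys(c,x+\lambda)=\sys(\psi_\lambda(c),x)$, the solutions of the system with coefficients $\psi_\lambda(c)$ are $r(\psi_\lambda(c))=\{r^k-\lambda\}_k$. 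By Proposition~\ref{prop:generic sol count} there are still exactly $\drn$ of them, since $\psi_\lambda$ is invertible and for generic $c$ all coefficients of $\psi_\lambda(c)$ stay nonzero. So I would study the family $\lambda\mapsto b_{k,i}(r(c)-\lambda)$.

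\textbf{Step 1 (Taylor shift).} By Proposition~\ref{prop:Pi in Qct}, $P_i\in\Q(c)[t]$ is monic of degree $\drn$. For the shifted system the $i$-th coordinate polynomial is $\prod_k(t-(r^k_i-\lambda_i))=P_i(t+\lambda_i)$. Its $t^k$-coefficient is therefore
\begin{align*}
  b_{k,i}(r-\lambda)=\frac{P_i^{(k)}(\lambda_i)}{k!}=\sum_{j=k}^{\drn}\binom{j}{k}b_{j,i}(r)\,\lambda_i^{\,j-k}.
\end{align*}
This is an element of $\Q(c)[\lambda_i]$. Its top term is $\binom{\drn}{k}\lambda_i^{\drn-k}$, which comes from $b_{\drn,i}=1$, so the leading coefficient is a nonzero rational. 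Hence for every $k$ and $i$, $b_{k,i}(r-\lambda)$ is a nonzero polynomial in $\lambda$ over $\Q(c)$. This is the precise sense in which the shift-parametrized polynomials $b_{k,i}(r)$ are not $0$.

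\textbf{Step 2 (specialisation of $\lambda$).} The family $\{b_{k,i}(r-\lambda)\}_{k,i}$ is finite, and each member is a nonzero polynomial in $\lambda\in\Q^n$. So the complement of the union of their zero sets is Zariski dense in $\Q^n$. Hence for all $\lambda$ outside a proper algebraic subset, every coefficient of every shifted $P_i$ is a nonzero element of $\Q(c)$. I would also check that $\psi_\lambda$ is an invertible $\Q$-linear change of coefficients. This lets us replace $c$ by $\psi_\lambda(c)$ without losing genericity, so Propositions~\ref{prop:generic galois group} and~\ref{prop:Pi in Qct} still apply to the shifted system. In particular, $b_{k,i}\neq0$ holds in the generic shifted system in the sense needed for the later specialisation to $c\in U$.

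\textbf{Main obstacle.} The only delicate point is bookkeeping rather than mathematics. One must make sure that the object called $b_{k,i}(r)$ is evaluated at the roots of the shifted system, and not treated as a formal polynomial, which would be vacuous. One must also make sure that the $\lambda$-dependence is genuinely polynomial with nonzero leading term. Step 1 settles both: monicity of $P_i$ forces the leading coefficient $\binom{\drn}{k}$, and this cannot cancel.
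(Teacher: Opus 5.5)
Your proposal is correct and follows the paper's proof essentially step for step: you realize the shift $x\mapsto x+\lambda$ as the coefficient change $\psi_\lambda$, write the shifted coefficients as $\sum_{j\ge k}\binom{j}{k}\lambda_i^{j-k}b_{j,i}(r(c))$, use monicity of $P_i$ to get the nonvanishing leading term $\binom{\drn}{k}\lambda_i^{\drn-k}$, and choose $\lambda$ away from the finitely many zeros. The only difference is your closing ``replace $c$ by $\psi_\lambda(c)$'', and the statement can be reached more directly: if $b_{k,i}(r(c))$ were $0$ in $\Q(c)$, then its pullback $b_{k,i}(r(\psi_\lambda(c)))=Q_k(\lambda_i)$ would vanish for every $\lambda$, which is impossible, so the claim holds for the original generic $c$ with no replacement needed.
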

\begin{proof}
    Use the notation $P^c_i$ and consider $\sys(c,x+\lambda)$. Then it also has $\drn$ roots and these are $\s{r^k-\lambda}_k$. Since $\sys(c,x+\lambda)=\sys(\psi_\lambda(c),x)$ we get $P^{\psi_\lambda(c)}_i=\prod_k(t-(r^k_i-\lambda_i))=\sum_{k=0}^\drn b_{k,i}(r(\psi_\lambda(c)))t^k$. Also, $P^{\psi_\lambda(c)}_i(t)=P^c_i(t+\lambda_i)=\sum_{k=0}^\drn b_{k,i}(r(c))(t+\lambda_i)^k$. $P^c_i$ is monic so $b_{\drn,i}(r(c))=1$ and for all $k$:
    \begin{align*}
        b_{k,i}(r(\psi_\lambda(c)))=\sum_{j=k}^\drn\binom{j}{k}\lambda_i^{j-k}b_{j,i}(r(c))=\binom{\drn}{k}\lambda_i^{\drn-k}+\sum_{j=k}^{\drn-1}\binom{j}{k}\lambda_i^{j-k}b_{j,i}(r(c)).
    \end{align*}
    To see that $b_{k,i}(r(\psi_\lambda(c)))\ne0$ for some $\lambda$, treat the rhs as a polynomial $Q_k(\lambda_i)$. Its coefficients do not depend on $\lambda_i$. Therefore $Q_k\ne0$ so it has finitely many zeros, so there is a $\lambda$ such that for all~$k$, $b_{k,i}(r(\psi_\lambda(c)))=Q_k(\lambda_i)\ne0$. Therefore the polynomial $b_{k,i}$ is nonzero for all $k,i$.
\end{proof}

Now we can prove the second and third items.

\begin{proposition}
    $P_i$ is the minimal polynomial of $r^1_i,\ldots,r^\drn_i$ and its Galois group is $S_\drn$.
\end{proposition}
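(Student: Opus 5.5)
The plan is to read everything off Proposition~\ref{prop:generic galois group} via the permutation action of $\Gamma\coloneqq\gal(\Q(c,r)/\Q(c))\cong\sdrn$ on the root tuples $r^1,\ldots,r^\drn$. The first step is to make precise that $\Gamma$ acts on the set of tuples $\s{r^k}_k$. As in the proof of Proposition~\ref{prop:Pi in Qct}, every automorphism maps solutions of $\sys$ to solutions. An automorphism fixing every tuple $r^k$ fixes all generators of $\Q(c,r)$ and is therefore the identity. So $\Gamma$ embeds into $S_\drn$ acting on tuples, and since $|\Gamma|=\drn!$, this embedding is onto: $\Gamma$ realizes every permutation of the tuples. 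In particular, the action on tuples is transitive, and in fact $2$-transitive.

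Next, fix $i$ and show that the coordinates $r^1_i,\ldots,r^\drn_i$ are pairwise distinct; this is the main obstacle. Suppose some $r^k_i=r^l_i$ with $k\ne l$. By $2$-transitivity, for any pair $k'\ne l'$ there is $\sigma\in\Gamma$ with $\sigma(r^k)=r^{k'}$ and $\sigma(r^l)=r^{l'}$, so $r^{k'}_i=r^{l'}_i$ for all pairs. Hence all $r^k_i$ coincide and $P_i=(t-a)^\drn$ with $a\in\Q(c)$, because $b_{\drn-1,i}=-\drn a\in\Q(c)$ by Proposition~\ref{prop:Pi in Qct}.

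I would rule this out in one of two ways. The first is a specialization: exhibit a single concrete $c$ with all $c_{i,s}\ne0$ whose $\drn$ solutions have distinct $i$-th coordinates. The second, which I would try first, uses the translation trick of Proposition~\ref{prop:dense} together with a linear change: if all $i$-th coordinates agree, then $x_i$ is a rational function of $c$ on the whole solution set. One could also argue that since $\drn\ge9>1$ there are at least two solutions, and that $x_i$ is constant on the solution set only if it is forced by the equations. But eliminating all variables except $x_i$ yields a univariate polynomial whose degree equals the BKK count by Proposition~\ref{prop:generic sol count}, so it has $\drn$ roots, not a single multiple one. Carrying out the specialization cleanly, for example via a perturbation of the anchor game, seems the most concrete route, and it uses the fact that distinctness of roots is an open condition.

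With distinctness in hand, the conclusions follow quickly. $P_i\in\Q(c)[t]$ by Proposition~\ref{prop:Pi in Qct} and has degree $\drn$. Its roots form a single $\Gamma$-orbit by transitivity, and since they are distinct, $P_i$ is irreducible over $\Q(c)$. Being monic, it is the minimal polynomial of each $r^k_i$. Finally, the splitting field $\Q(c)(r^1_i,\ldots,r^\drn_i)$ is a subfield of $\Q(c,r)$. The map $r^k\mapsto r^k_i$ is a $\Gamma$-equivariant bijection, so the action of $\Gamma$ on the roots of $P_i$ is faithful and equals all of $S_\drn$. Hence the Galois group of $P_i$ is $S_\drn$, and the splitting field of $P_i$ is in fact all of $\Q(c,r)$.
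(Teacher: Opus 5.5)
Your framework matches the paper's, and those parts are correct: the faithful action of $\Gamma\cong S_\drn$ on the solution tuples, the $2$-transitivity dichotomy (the $i$-th coordinates are either pairwise distinct or all equal), and the final deduction of irreducibility and of the full Galois group from a faithful transitive action. The gap is the step you yourself flag as the main obstacle. You never actually exclude $P_i=(t-a)^\drn$, and none of the routes you sketch works as written.

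\begin{itemize}
\item The elimination argument assumes what is to be proved. $(t-a)^\drn$ also has degree $\drn$, and Proposition~\ref{prop:generic sol count} counts solution tuples, not distinct values of $x_i$. An eliminant has $\drn$ distinct roots exactly when $x_i$ separates the solutions, which is the claim itself.
\item The anchor game cannot serve as the witness. The system $2x_{i+1}-1=0$ ($i<n$), $1-2x_1=0$ has a single solution in $(\C\setminus\{0\})^n$. So all but one of the $\drn$ generic solutions degenerate as $c\to c_0$, and openness of distinctness near $c_0$ says nothing about them.
\item The ``concrete $c$'' route is left unexecuted.
\end{itemize}

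The missing idea, as in the paper, is short. The $\drn>1$ tuples are distinct, so they cannot agree in every coordinate. Hence some index $i_0$ has pairwise distinct coordinates $r^k_{i_0}$. Transfer this to every $i$ using the player-relabeling symmetry of the generic system. Let $\pi=(i\;i_0)$ and take the automorphism $c_{j,s}\mapsto c_{\pi(j),\pi(s)}$ of $\Q(c)$, extended to $\overline{\Q(c)}$ as $\tilde\tau$. It sends each solution $r^k$ to the solution $y$ with $y_m=\tilde\tau(r^k_{\pi(m)})$. Hence it maps $\{r^k_i\}_k$ bijectively onto $\{r^k_{i_0}\}_k$, and distinctness carries over.

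Your specialization idea can also be completed. Use the witness $h_j(x)=\prod_{l\ne j}(x_l-b_{jl})$ with $b_{jl}\ne0$ and distinct within each column $l$.
\begin{itemize}
\item All its coefficients are nonzero.
\item Its $\drn$ torus solutions are simple and are indexed by derangements $\sigma$ via $x_{\sigma(j)}=b_{j,\sigma(j)}$.
\item On these solutions, $x_i$ takes $n-1\ge3$ distinct values.
\end{itemize}
Since the BKK count is attained with simple roots, the generic roots specialize to these, which rules out ``all equal.'' By your dichotomy, that is all you need; full distinctness of the witness's coordinates is not required.
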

\begin{proof}
    First we prove that for every $k_1\ne k_2$ and for every $i$, $r^{k_1}_i\ne r^{k_2}_i$.
    Suppose this is not the case, then for every $\sigma\in S_\drn$ (taken from the Galois group in Proposition~\ref{prop:generic galois group}) such that $\sigma(k_1)=l_1,\sigma(k_2)=l_2$ for any $l_1\ne l_2$ we get $\sigma(r^{k_1}_i)=r^{\sigma(k_1)}_i=r^{\sigma(k_2)}_i=\sigma(r^{k_2}_i)$. $S_\drn$ is $2$-transitive so $r^1_i=\ldots=r^\drn_i$.
    
    So $\s{r^k_i}_{k=1}^\drn$ are either all identical or all different. For some $i_0$, $\s{r^k_{i_0}}_{k=1}^\drn$ are all different, otherwise $r^1=\ldots=r^\drn$ meaning $f_1=\ldots=f_n=0$ has only one solution, in contradiction to having $\drn>1$ solutions.
    
    Now we permute $\sys$ by relabeling players by swapping $i,i_0$ in the coefficients, variables and functions that define $\sys$. This relabeled system is equivalent to $\sys(c,x)$ because all the coefficients and variables are generic. The relabeled system has solutions $\{\Tilde r^k\}_{k\in[\drn]}$. Applying the argument above to the relabeled system implies that $\s{\Tilde r^k_{i_0}}_k=\s{r^k_i}_k$ are all different. Therefore the only possible case is that for all $i$, $\s{r^k_i}_k$ are all different.
    
    Therefore $P_i$ is a separable polynomial and it is in $\Q(c)[t]$ (Proposition~\ref{prop:Pi in Qct}) so we have its Galois group $H=\gal(\Q(c)(r^1_i,\ldots,r^\drn_i)/\Q(c))$. By Proposition~\ref{prop:generic galois group} we also have the Galois group $G=\gal(\Q(c)(r^1_1,\ldots,r^\drn_n)/\Q(c))\cong S_\drn$. Define the homomorphism $\phi:G\to H$ by $\phi(\sigma)(r^k_i)=r^{\sigma(k)}_i$. Since $\s{r^k_i}_k$ are all different, $\phi$ is injective. Therefore $G\lesssim H$. Since $H\lesssim S_\drn$, we have $H\cong S_\drn$. Therefore $H$ is transitive so $P_i$ is irreducible.
    
    Overall $P_i$ is the minimal polynomial of $r^1_i,\ldots,r^\drn_i$ and its Galois group is $S_\drn$. 
\end{proof}

In the last step we finish the proof by specializing the generic $c$ in a way that preserves all the properties we proved in the previous steps:

\begin{proposition}[Specializing the generic $c$]
    There is a $c^*\in U\cap\Q^N$ and a game $g^*=\game(c^*)$ with integer payoffs such that:
    \begin{enumerate}
        \item $\NE(g^*)=\s{x^*}$,
        \item all the coordinates of $x^*$ are algebraic numbers of degree $\drn$ and Galois group $S_\drn$,
        \item the minimal polynomial of each $x^*_i$ is $P_i=\sum_{k=0}^\drn b_{k,i}(c^*)t^k$ where $b_{k,i}(c^*)\ne0$ for all $k,i$.
    \end{enumerate}
\end{proposition}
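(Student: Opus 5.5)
The plan is to specialize with Hilbert's irreducibility theorem, since $\Q(c)$ is a purely transcendental extension of the Hilbertian field $\Q$. We already know three things at the generic level. First, $P_i\in\Q(c)[t]$ is irreducible of degree $\drn$ with Galois group $S_\drn$ (previous proposition). Second, each coefficient $b_{k,i}$ is a nonzero element of $\Q(c)$ (Proposition~\ref{prop:dense}). Third, $U$ is a nonempty open subset of $\R^N$ on which every game has a unique, fully mixed NE. The aim is a rational point $c^*\in U$ at which all of these properties survive. The uniqueness clause of item~1 then follows immediately from $c^*\in U$.

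\textbf{Step 1: remove a Zariski-closed bad set.} Write each $b_{k,i}$ as a quotient of polynomials in $\Q[c]$. Let $D(c)\in\Q[c]$ be the product of three kinds of factors:
\begin{itemize}
\item all denominators and all numerators of the $b_{k,i}$;
\item the discriminants of the $P_i$;
\item all $c_{i,s}$, and the mixed-volume nondegeneracy condition, i.e.\ the condition that the specialized system still has exactly $\drn$ isolated solutions in $(\C\setminus\s{0})^n$ and none at infinity (by BKK this is a nonvanishing condition on facial subsystems, hence a nonzero polynomial condition).
\end{itemize}
For $c^*$ with $D(c^*)\ne0$, the specialized $P_i^{c^*}$ equals $\prod_k(t-r^k_i(c^*))$, where the $r^k(c^*)$ are the $\drn$ complex solutions of $\sys(c^*,x)=0$. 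It has degree $\drn$, is separable, and has every coefficient $b_{k,i}(c^*)\ne0$. Since the NE $x^*=\phi(c^*)$ is a fully mixed solution, hence one of the $r^k(c^*)$, each $x^*_i$ is a root of $P_i^{c^*}$.

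\textbf{Step 2: Hilbert irreducibility.} Apply Hilbert's irreducibility theorem in its Galois form to the $n$ polynomials $P_i$ simultaneously, or to the splitting field of $\prod_iP_i$, whose Galois group over $\Q(c)$ is $S_\drn$ by Proposition~\ref{prop:generic galois group}. The set of $c^*\in\Q^N$ at which some specialized $P_i^{c^*}$ becomes reducible, or has Galois group smaller than $S_\drn$, is a thin set. The complement of a thin set in $\Q^N$ is dense in $\R^N$. This is the key point: it is not enough that Hilbert sets are nonempty. We need the refinement that Hilbert subsets of $\Q^N$ meet every nonempty open set, which holds because for $N\ge1$ a Hilbert set contains a set of the form (Zariski-dense set) $\cap$ (real open dense set). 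Concretely, restrict to a line $c=a+tb$ with $a,b\in\Q^N$ chosen through $U$ with $D\not\equiv0$ on it, such that the restricted polynomials keep Galois group $S_\drn$. This is possible by Hilbert for the parameters $a,b$. Then use the one-variable fact that Hilbert sets in $\Q$ contain all but finitely many elements of suitable arithmetic progressions, hence meet every real interval.

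\textbf{Step 3: conclude.} Intersecting with $\{D\ne0\}$, which is open and dense, we obtain $c^*\in U\cap\Q^N$ with every property above. Then $P_i^{c^*}$ is irreducible with a root $x^*_i$. It is therefore the minimal polynomial of $x^*_i$, of degree $\drn$, with Galois group $S_\drn$ and all coefficients nonzero. Scaling $c^*$ gives integer payoffs.

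The main obstacle is Step~2. We need Hilbert specialization points inside the prescribed real open set $U$, not merely somewhere in $\Q^N$. We also need the specialized polynomial to be the correct characteristic polynomial of the actual roots; this is why the non-degeneracy factor was included in $D$.
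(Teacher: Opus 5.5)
Your architecture matches the paper's. You remove a thin set: Hilbert irreducibility for the $S_\drn$-extension, plus the proper Zariski-closed loci where some $b_{k,i}$ has a zero or pole, some discriminant vanishes, or some $c_{i,s}=0$. You then look for a rational point of $U$ outside it. Your Step~1 is more careful than the paper, which leaves implicit that $x^*_i$ is a root of the specialized $P_i$; your nondegeneracy factor supplies exactly this.

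The problem is the crux you correctly flag: finding a rational point outside the thin set \emph{inside} $U$. The fact you state, that the complement of a thin set in $\Q^N$ is dense in $\R^N$, is true. The paper gets it from Serre's Proposition~3.5.3, which says a thin set misses some nonempty open $W\subseteq\prod_{p\in S}\Q_p^N$ for a finite set of primes $S$. Weak approximation then makes $\Q^N$ dense in $\R^N\times\prod_{p\in S}\Q_p^N$, so some $c^*$ lands in $U\times W$. The argument you sketch instead does not establish it as written, for three reasons.
\begin{itemize}
\item[(i)] Containing a set $Z\cap O$, with $Z$ Zariski dense and $O$ real open dense, does not force meeting $U$. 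Take $Z=\Z^N$ and $O=\R^N$.
\item[(ii)] Choosing $(a,b)$ ``by Hilbert'' while requiring the line to pass through $U$ is the same density problem again, now in $\Q^{2N}$. Instead, fix any $a\in U\cap\Q^N$ with $D(a)\neq0$ and specialize only $b$. This is legitimate because $\Q(b,t)=\Q(c)(t)$ is purely transcendental over $\Q(c)$, so the generic group over $\Q(b,t)$ is still $S_\drn$. You then need the standard form of Hilbert's theorem that specializes $b\in\Q^N$ while keeping $t$ transcendental.
\item[(iii)] Knowing that a Hilbert subset of $\Q$ contains an integer arithmetic progression says nothing about a short real interval. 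You need a further step, for example the substitution $t=t_0+1/s$ with $t_0\in\Q$ in the interval. This substitution preserves thinness, so large $s$ in the progression give good points accumulating at $t_0$.
\end{itemize}
With these repairs, or simply by citing the $p$-adic avoidance plus weak approximation argument, your proof goes through and coincides with the paper's.
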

\begin{proof}
    Every game in $U$ has a unique NE. To satisfy (2), we want to specialize $f_1,\ldots,f_n$ correctly, maintaining the Galois groups of the solutions and the density of their minimal polynomials.

    For every $i$ use \cite[Proposition 3.3.5]{serre2016topics} and find a thin set $A_i\subseteq\Q^N$ such that for all $c\in\Q^N\setminus A_i$, $P_i$ is irreducible over $\Q$ and its Galois group is (isomorphic to) $S_\drn$. To $A_i$ add all the rational points at which $b_{k,i}$ has a zero or a pole. By Proposition~\ref{prop:dense}, zeros and poles of the $b_{k,i}$s add to $A_i$ a proper Zariski-closed set. Also add to $A_i$ the rational points at which $P_i$ becomes inseparable. $P_i$ is inseparable at rational points where its discriminant is zero. $P_i$ (with generic $c$) is separable so its discriminant's zero set is a proper Zariski-closed set. Overall $A_i$ is still thin. Let $A=\bigcup_iA_i$. Then $A$ is also thin by definition. For every $c\in\Q^N\setminus A$, every $P_i$ is irreducible over $\Q$, has Galois group isomorphic to~$S_\drn$ and all its $\drn+1$ coefficients are nonzero.

    Set $U$ to $U\cap\s{c\in\R^N\mid\forall i,s,~c_{i,s}\ne0}$. Note that $U$ is still open and nonempty. We now prove that $U\cap(\Q^N\setminus A)\ne\emptyset$. For an element $y\in Y$ and product set $\prod_{i\in I}Y_i$ where $Y,Y_i$ have the same number of coordinates, say that $y\in\prod_iY_i$ if $(y,\ldots,y)$ ($|I|$ times) is in the product set. Interpret $Y\subseteq\prod_iY_i$ similarly.
    
    We use \cite[Proposition 3.5.3]{serre2016topics}. Use their notation and set $S_0=\s{\infty}$. Then there is a finite set of primes $S$ and an open set $W\subseteq\prod_{p\in S}\Q_p^N$ such that $A\cap W=\emptyset$. For every prime $p$, $\Q^N$ is dense in $\Q_p^N$, so $W\cap\Q^N\ne\emptyset$, meaning that $\Q^N\setminus A\ne\emptyset$.
    
    Furthermore, $\Q^N$ has the weak approximation property \citep[Chapter 5]{serre2016topics}, i.e., $\Q^N$ is dense in $\R^N\times\prod_{p\in S}\Q_p^N$ with the suitable product topology.
    Therefore by definition of denseness there is~a~$c^*\in\Q^N$ such that $(c^*,c^*)\in U\times W$ so $c^*\in U\cap(\Q^N\setminus A)$.

    $g^*=\game(c^*)$ has integer payoffs ($c^*\in\Q^N$ and after multiplying the payoffs in $g^*$ by a common denominator) and a unique NE $x^*=(x^*_1,\ldots,x^*_n)$ ($c^*\in U$). Also $\algdeg{x^*_i}=\drn$ for all $i$ and the minimal polynomial $P_i$ of $x^*_i$ has all its $\drn+1$ coefficients nonzero ($c^*\notin A$), concluding the proof.
\end{proof}

This concludes the proof of Proposition~\ref{prop:main}.

\printbibliography

\appendix

\section{Generated Solution Output}\label{app:app}

Let \(D={!n}\), the number of derangements of \(\{1,\dots,n\}\). I will prove existence.

I will use two standard algebraic facts. First, the Bernstein--Kouchnirenko theorem: a generic sparse square system has as many torus roots as the lattice mixed volume of its Newton polytopes. Second, Esterov’s theorem: a reduced irreducible sparse square system has full symmetric monodromy/Galois group. Esterov’s paper states both the Kouchnirenko--Bernstein root count setup and the full symmetric monodromy theorem for reduced irreducible tuples. Hilbert irreducibility is then used to specialize the generic rational-coefficient system to one with rational coefficients while preserving the Galois groups; this is the standard Noether-Hilbert specialization principle.

\subsection*{1. Encoding two-action games by multiaffine equations}

Let \(x_i\in[0,1]\) denote the probability that player \(i\) plays action \(1\). In any two-action game, the expected payoff difference

\[
f_i(x_{-i})
=
\mathbb E[u_i(1,\cdot)]-\mathbb E[u_i(0,\cdot)]
\]

is a multiaffine polynomial in the variables \(x_j\), \(j\ne i\). Conversely, every such multiaffine polynomial can be realized by a two-action game: set the payoff difference at each pure opponents’ profile equal to the value of the polynomial at the corresponding cube vertex, and interpolate multilinearly.

A profile \(x\) is a Nash equilibrium exactly when, for every \(i\),

\[
x_i f_i(x_{-i})\ge 0,
\qquad
(1-x_i)f_i(x_{-i})\le 0.
\]

In particular, if \(x\) is totally mixed, then \(x\) is a Nash equilibrium iff

\[
f_1(x)=\cdots=f_n(x)=0.
\]

Now define the full two-action support

\[
A_i=\{\alpha\in\{0,1\}^n:\alpha_i=0\}.
\]

Thus \(f_i\) is a general polynomial supported on \(A_i\), i.e. it contains all squarefree monomials in the variables other than \(x_i\).

\subsection*{2. The mixed volume is the derangement number}

Let \(P_i=\operatorname{conv}(A_i)\). Then

\[
P_i=\{x\in\mathbb R^n:0\le x_j\le 1\text{ for }j\ne i,\ x_i=0\}.
\]

For nonnegative \(\lambda_1,\dots,\lambda_n\),

\[
\lambda_1P_1+\cdots+\lambda_nP_n
=
\prod_{j=1}^n
\left[0,\sum_{i\ne j}\lambda_i\right].
\]

Therefore its volume is

\[
\prod_{j=1}^n\left(\sum_{i\ne j}\lambda_i\right).
\]

The coefficient of \(\lambda_1\cdots \lambda_n\) is the number of ways to choose, from the \(j\)-th factor, a \(\lambda_i\) with \(i\ne j\), using every \(i\) exactly once. That is precisely the number of derangements of \(n\) letters. Hence

\[
\operatorname{MV}(P_1,\dots,P_n)={!n}=D.
\]

So a generic system

\[
f_1=\cdots=f_n=0
\]

with these supports has \(D\) isolated roots in \((\mathbb C^*)^n\).

\subsection*{3. The generic Galois group is \(S_D\), and each coordinate has degree \(D\)}

The tuple \((A_1,\dots,A_n)\) is reduced: the lattice generated by all differences \(A_i-A_i\) contains every basis vector \(e_j\), because \(e_j\in A_i-A_i\) whenever \(i\ne j\).

It is also irreducible. Indeed, if \(I\subset\{1,\dots,n\}\) has \(|I|=m<n\), then:

\[
\dim\sum_{i\in I}P_i
=
\begin{cases}
n-1,&m=1,\\
n,&m\ge 2.
\end{cases}
\]

Since \(n\ge4\), this dimension is always \(>m\). Thus Esterov’s theorem applies, and the generic monodromy group of the \(D\) roots is \(S_D\).

We also need the \(i\)-th coordinate itself to have degree \(D\). For a fixed coordinate \(k\), consider the equivalence relation on the \(D\) generic roots:

\[
r\sim s
\quad\Longleftrightarrow\quad
(x_k\text{ on root }r)=(x_k\text{ on root }s).
\]

This relation is invariant under monodromy. Since the monodromy group is \(S_D\), the relation is either discrete or universal. It is not universal: choose constants \(b_{ij}\) with \(i\ne j\), distinct in each column \(j\), and consider

\[
h_i(x)=\prod_{j\ne i}(x_j-b_{ij}).
\]

The roots of \(h_1=\cdots=h_n=0\) are exactly the points indexed by derangements \(\sigma\), namely

\[
x_{\sigma(i)}=b_{i,\sigma(i)}.
\]

For each coordinate \(k\), different derangements can have different \(\sigma^{-1}(k)\), and hence different \(x_k\). Therefore the generic relation is not universal, so it is discrete. Thus the generic \(k\)-th coordinate takes \(D\) distinct values.

Therefore, for every \(k\), the coordinate eliminant

\[
E_k(T)=\prod_{\text{generic roots }z}(T-z_k)
\]

has degree \(D\) and Galois group \(S_D\) over the rational function field of the coefficients.

\subsection*{4. A unique-equilibrium base game}

Now construct a simple base game with a unique equilibrium. Define

\[
f_i^0=x_{i+1}-\frac12\quad(1\le i<n),
\qquad
f_n^0=\frac12-x_1.
\]

At any equilibrium, for \(i<n\),

\[
x_{i+1}>\frac12\Rightarrow x_i=1,
\qquad
x_{i+1}<\frac12\Rightarrow x_i=0,
\]

while for player \(n\),

\[
x_1>\frac12\Rightarrow x_n=0,
\qquad
x_1<\frac12\Rightarrow x_n=1.
\]

If some coordinate is \(>1/2\), these implications propagate around the cycle and force the same coordinate to be \(<1/2\), a contradiction. The same argument rules out a coordinate \(<1/2\). Hence every equilibrium must be

\[
x^0=\left(\frac12,\dots,\frac12\right).
\]

At \(x^0\), all players are indifferent, so \(x^0\) is indeed an equilibrium. Thus the base game has a unique Nash equilibrium.

Moreover, this property is open. The Jacobian of

\[
(f_1^0,\dots,f_n^0)
\]

at \(x^0\) is a signed cyclic permutation matrix, hence invertible. By the implicit function theorem, the totally mixed zero near \(x^0\) persists uniquely under small coefficient perturbations. Compactness of \([0,1]^n\), together with the Nash inequalities above, rules out new boundary equilibria for all sufficiently small perturbations. Therefore there is a Euclidean open neighborhood \(V\) of \(f^0\) in coefficient space such that every system in \(V\) defines a game with a unique, totally mixed Nash equilibrium.

\subsection*{5. Specialize to rational coefficients with full Galois group}

The generic coordinate eliminants \(E_k(T)\) have Galois group \(S_D\). By Hilbert irreducibility, in its standard form allowing approximation inside a prescribed real open set, there exists a rational coefficient vector

\[
f=(f_1,\dots,f_n)\in V
\]

such that, for every \(k\), the specialized eliminant \(E_{k,f}(T)\) remains irreducible of degree \(D\) and has Galois group \(S_D\) over \(\mathbb Q\).

Let \(z=(z_1,\dots,z_n)\) be the unique Nash equilibrium of this rational game. Since it is totally mixed, it is one of the \(D\) complex roots of

\[
f_1=\cdots=f_n=0.
\]

For every \(k\), \(z_k\) is a root of \(E_{k,f}\). Since \(E_{k,f}\) is irreducible of degree \(D\), it is the minimal polynomial of \(z_k\). Hence

\[
[\mathbb Q(z_k):\mathbb Q]=D={!n},
\]

and the Galois group of \(z_k\) is \(S_D=S_{!n}\).

\subsection*{6. Make every coefficient of every minimal polynomial nonzero}

Let \(q_k(T)\) be the minimal polynomial of \(z_k\). Choose a rational number \(\varepsilon_k\) arbitrarily small. The minimal polynomial of

\[
y_k=z_k+\varepsilon_k
\]

is

\[
q_k(T-\varepsilon_k).
\]

For a fixed nonconstant polynomial \(q_k\), each coefficient of \(q_k(T-\varepsilon)\) is a polynomial function of \(\varepsilon\). The coefficient of \(T^{D-r}\) has highest \(\varepsilon\)-term

\[
\binom Dr(-\varepsilon)^r,
\]

so it is not identically zero. Therefore only finitely many rational \(\varepsilon\) make any coefficient vanish. Choose each \(\varepsilon_k\) small enough, rational, and outside this finite bad set.

Now define a shifted game by

\[
g_i(y_{-i})
=
f_i(y_1-\varepsilon_1,\dots,y_{i-1}-\varepsilon_{i-1},
y_{i+1}-\varepsilon_{i+1},\dots,y_n-\varepsilon_n).
\]

For sufficiently small \(\varepsilon_k\), the new coefficient vector remains inside \(V\), so the shifted game still has a unique Nash equilibrium. That equilibrium is

\[
y=(z_1+\varepsilon_1,\dots,z_n+\varepsilon_n).
\]

For every \(k\),

\[
[\mathbb Q(y_k):\mathbb Q]=D={!n},
\]

the Galois group is still \(S_D\), and the minimal polynomial

\[
P_k(T)=q_k(T-\varepsilon_k)
\]

has all \(D+1={!n}+1\) coefficients nonzero.

\subsection*{7. Clear denominators to get integer payoffs}

The shifted polynomials \(g_i\) have rational coefficients. Choose a positive integer \(M\) clearing all denominators of the vertex values \(g_i(v_{-i})\), where \(v_{-i}\in\{0,1\}^{n-1}\).

Define the actual integer-payoff game by

\[
u_i(1,v_{-i})=M g_i(v_{-i}),
\qquad
u_i(0,v_{-i})=0.
\]

Then the expected payoff difference for player \(i\) is exactly \(M g_i(y_{-i})\). Multiplication by the positive number \(M\) does not change best responses or Nash equilibria. Thus the resulting game has integer payoffs and the same unique Nash equilibrium \(y\).

Therefore, for every \(n\ge4\), there exists an \(n\)-player two-action integer-payoff game with a unique Nash equilibrium \(y\) such that, for every \(i\),

\[
[\mathbb Q(y_i):\mathbb Q]={!n},
\qquad
\operatorname{Gal}(P_i)=S_{!n},
\]

and the minimal polynomial \(P_i\) of \(y_i\) has all \({!n}+1\) coefficients nonzero.

\end{document}